\newtheorem{theorem}{Theorem}[section]
\newtheorem{proposition}[theorem]{Proposition}
\newtheorem{lemma}[theorem]{Lemma}
\newtheorem{definition}[theorem]{Definition}
\theoremstyle{remark}
\newtheorem{example}[theorem]{Example}
\newtheorem{remark}[theorem]{Remark}
\numberwithin{equation}{section}
\newcommand*\pFq[6][8]{%
  \begingroup 
  \pFqmuskip=#1mu\relax
  \mathcode`\,=\string"8000
  \begingroup\lccode`\~=`\,
  \lowercase{\endgroup\let~}\pFqcomma
  {}_{#2}F_{#3}{\left[\genfrac..{0pt}{}{#4}{#5};#6\right]}%
  \endgroup
}
\newcommand{\pFqcomma}{\mskip\pFqmuskip}
\begin{document}
\title[Orthogonal Polynomials and Perfect State Transfer]{Orthogonal Polynomials and Perfect State Transfer}

\author[R.~Bailey]{R.~Bailey}
\address{
RB,
Department of Mathematical Sciences\\
Bentley University\\
175 Forest Street\\
Waltham, MA 02452, USA}
\email{rbailey@bentley.edu}


\subjclass{Primary 33C45, 81P45; Secondary 47B36}
\keywords{quantum information, orthogonal polynomials, Jacobi matrices, Krawtchouk polynomials}

\begin{abstract}
The aim of this review paper is to discuss some applications of orthogonal polynomials in quantum information processing. The hope is to keep the paper self contained so that someone wanting a brief introduction to the theory of orthogonal polynomials and continuous time quantum walks on graphs may find it in one place. In particular, we focus on the associated Jacobi operators and discuss how these can be used to detect perfect state transfer. We  also discuss how orthogonal polynomials have been used to give results which are analogous to those given by Karlin and McGregor when studying classical birth and death processes. Finally, we show how these ideas have been extended to quantum walks with more than nearest neighbor interactions using exceptional orthogonal polynomials. We also provide a (non-exhaustive) list of related open questions.
\end{abstract}
\maketitle

\section{Introduction}

The pursuit of quantum computation has inspired much work both in applied physics and in mathematics. Quantum computers process information in a fundamentally different way than any classical device, and they promise an exponential speedup over the classical computers for solving certain types of problems, aiding in the development cryptography, machine learning and AI. They could also prove useful for modeling quantum physical systems. 
Understanding and managing the interactions of quantum particles is a key challenge of implementing quantum computation, and one can model the existence of physical interactions between particles as edges connecting vertices in a graph. This paper will focus on path graphs representing quantum spin chains, i.e. linear arrangements of quantum particles. Spin chains serve as simplified representations of more complex systems and thus we can gain a lot of insight into quantum behavior by studying these models. We will study an extension to a larger class of graphs in Section \ref{sec:XOPS}.

To begin, recall that a quantum bit (qubit) is the fundamental unit of information in quantum computing, similar to an ordinary bit except that where a bit takes on the value 0 or 1, a qubit can be in a superposition of both 0 and 1. We only can determine the probability that after observation, the qubit will be 0 or 1. The \textit{state} of a qubit is then represented by a vector $v=(c_0,c_1)^{\top} \in \mathbb{C}^2$ where $|v|=1$. Then $|c_0|^2$ is the probability that after measuring, the qubit is 0, and $|c_1|^2$ is the probability the qubit is 1. Putting $n$ qubits together, the state of a spin chain with $n$ qubits is likewise a unit vector in $\mathbb{C}^{2^n}$ that encodes how the qubit systems are oriented. 



To study the evolution of a quantum system over time, we consider the evolution operator, $U(t)=e^{itH}$, where $H$ is the \textit{Hamiltonian} of the system (a unitary operator representing the total energy of the system). For continuous time quantum walks on graphs, the Hamiltonian is taken to be the adjacency matrix or the graph Laplacian. We provide more detail in Section \ref{sec:PST}.

The subject of transferring a quantum state from one location to another within a quantum computer is an important task and has gained much interest from the mathematical community (see \cite{Godsil2017}, \cite{KAY_2010}, \cite{Kirkland2019}...) In particular, it is desirable to say with probability one, that the state of one qubit is transferred to another after some time. This phenomenon is called \textit{perfect state transfer} (PST) and in what follows, we will highlight how the theory of orthogonal polynomials has played an important role in the study of PST. We begin by reviewing important properties of orthogonal polynomials and Jacobi matrices.


\section{Orthogonal Polynomials}\label{sec:OPS}
\begin{definition}
Let $\mathcal{L}$ be a complex valued linear functional on the vector space of all polynomials  $\mathbb{P}$. A sequence of polynomials, $\{P_{n}(x)\}_{n=0}^{\infty}$, is an \textbf{orthogonal polynomial sequence} (OPS) with respect to $\mathcal{L}$ provided for all $m, n=0,1,2\dots$
\begin{enumerate}

    \item $P_n(x)$ is a polynomial of degree $n$,
    \item $\mathcal{L}[P_n(x)P_{m}(x
    )]=0$ for $n \neq m$,
    \item $\mathcal{L}[P_n^2(x)]\neq 0$.   
\end{enumerate}
\end{definition}
Often, the linear functional $\mathcal{L}$ takes the form
\[
\mathcal{L}[p]=\int_{a}^bp(x)w(x)\, dx
\] where $a$ and $b$ may be infinite and $w(x)$ is a nonnegative function which is positive on a subset of $(a,b)$ of positive Lebesgue measure. The function $w(x)$ is called the \textit{weight function}.

Because the purpose of this paper is to study applications of OPS in continuous time quantum walks on finite graphs, we will focus on finite sequences of orthogonal polynomials, which are families of  polynomials $\{P_n(x)\}_{n=0}^N$ satisfying the above properties for all $n=0,1,2,\dots, N$. 

A well known example of finite orthogonal polynomials are the Krawtchouk polynomials, introduced by Mykhailo Kravchuk when studying a discrete analogue of the Hermite polynomials (an OPS with respect to the Gaussian density function).
\begin{definition}\label{def:krawtchouk}
    Let $0<p<1$ and let $M\in \mathbb{Z}_+$. Then for $n=0,1,2,\dots,M$, the $n$-th degree monic Krawtchouk polynomial is defined by
\begin{equation}\label{def:Krawtchouk}
\begin{split}
    K_n(x;p,M) &= (-M)_np^n
\pFq{2}{1}{-x,-n}{-M}{1/p}\\
&=\sum_{j=0}^n\frac{(-n)_j(-M+j)_{n-j}}{j!}p^{n-j}(-x)_j
\end{split}
\end{equation} where $(a)_n$ is the Pochhammer symbol 
   \[(a)_n =\begin{cases}
   1& (n=0)\\
   a(a+1)\dots (a+n-1) & (n=1,2,\dots)
   \end{cases}.
   \]
\end{definition}
The Krawtchouk polynomials are orthogonal with respect to the following linear functional,
\begin{equation}
   \mathcal{L}[p]= \sum_{x=0}^M\binom{M}{x}p^x(1-p)^{M-x}p(x).
\end{equation}

\noindent Below are the first few monic Krawtchouk polynomials for $M=4$ and $p=1/2$.
\begin{align*}
    K_0(x)&=1\\
    K_1(x)&=x-2\\
    K_2(x)&=x^2-4x+3\\
    K_3(x)&=x^3-6x^2+\frac{19}{2}x-3\\
    K_4(x)&=x^4-8x^3+20x^2-16x+\frac{3}{2}
\end{align*}
It is evident from their weight function that the Krawtchouk polynomials have many applications in probability, and in particular, are linked to the binomial distribution and random walks. We will exploit this application in Section 
\ref{sec:karlinmcgregor} and discuss how it has been used to study perfect state transfer in quantum walks on path graphs.

\section{Finite Orthogonal Polynomials and Jacobi Matrices}\label{sec:finiteOPS}

\noindent Given an $N\times N$ Jacobi matrix $J$,
\begin{equation}\label{eq:normalized}
J=\begin{pmatrix}
b_0 & a_0 & 0 &\cdots &0&0   \\
     a_0 & b_1 & a_1 & \cdots&0 &0 \\
     0 & a_1 & b_2 & \cdots&0&0  \\
      \vdots& \vdots& \vdots&\ddots &\vdots&\vdots\\
      0&0&0&\cdots&b_{N-2}&a_{N-2}\\
      0&0&0&\cdots&a_{N-2}&b_{N-1}
\end{pmatrix}
\end{equation} with $a_n\neq 0, n \geq 0$, one may associate to it a family of orthonormal polynomials, $\{P_n(x)\}_{n=0}^{N}$, defined via the following three-term recurrence relation:
\begin{equation}\label{SymTriTermNSec5}
a_nP_{n+1}(x)+b_nP_n(x)+a_{n-1}P_{n-1}(x)=xP_n(x), \quad n=0,1,2,\dots,N-1
\end{equation} with $P_{-1}(x):=0$ and $P_{0}(x):=1$ (see Favard's Theorem, \cite{Chihara}). Here, $a_{N-1}$ can be chosen to be an arbitrary constant so we take it to be 1. 

If $a_n>0$ for all $n=0,1,\dots, a_{N-2}$, then the linear functional corresponding to $\{P_n(x)\}_{n=0}^{N}$ is positive-definite on span$(1,x,x^2,\dots, x^{N-1})$ and takes the form
\[\mathcal{L}[p]=
\sum_{k=0}^{N-1}p(\lambda_k)w(\lambda_k)
\] for a weight function $w(x)$ which is positive on the $\lambda_k$. In this case, we obtain the following important properties of the zeros of $\{P_n(x)\}_{n=0}^{N}$ (one may refer to \cite{Baik2007} for more information on discrete orthogonal polynomials).

\begin{proposition}\label{prop:zeros}
The zeros of the $\{P_n(x)\}_{n=0}^{N}$ are real and simple.
\end{proposition}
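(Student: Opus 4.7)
The plan is to identify the zeros of $P_n(x)$ with the eigenvalues of a real symmetric tridiagonal matrix, then invoke two standard facts: real symmetric matrices have real eigenvalues, and irreducible tridiagonal matrices have simple eigenvalues.

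First I would let $J_n$ denote the leading $n \times n$ principal submatrix of $J$ (the truncation at row/column $n-1$). Writing the recurrence \eqref{SymTriTermNSec5} in matrix form for the vector $\mathbf{P}(x) = (P_0(x), P_1(x), \ldots, P_{n-1}(x))^\top$, one obtains
\[
J_n \mathbf{P}(x) = x \mathbf{P}(x) - a_{n-1} P_n(x) \mathbf{e}_{n-1},
\]
where $\mathbf{e}_{n-1}$ is the last standard basis vector. Consequently, if $\lambda$ is a zero of $P_n$, then $\mathbf{P}(\lambda)$ is an eigenvector of $J_n$ with eigenvalue $\lambda$, and this vector is nonzero because $P_0(\lambda) = 1$. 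Counting degrees, the $n$ zeros of $P_n$ (counted with multiplicity) are exactly the $n$ eigenvalues of $J_n$ (counted with multiplicity).

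Next, since the hypothesis $a_k > 0$ for $0 \leq k \leq N-2$ guarantees $J_n$ is a real symmetric matrix, its eigenvalues are real, which gives reality of the zeros of $P_n$. For simplicity, I would argue that every eigenspace of $J_n$ is one-dimensional: given any eigenvector $\mathbf{v} = (v_0, v_1, \dots, v_{n-1})^\top$ with $J_n \mathbf{v} = \lambda \mathbf{v}$, the tridiagonal eigenvalue equation rearranges to $a_k v_{k+1} = (\lambda - b_k) v_k - a_{k-1} v_{k-1}$ for $k = 0, 1, \dots, n-2$. Because $a_k \neq 0$, this recurrence uniquely determines $v_1, v_2, \dots, v_{n-1}$ from $v_0$, so the geometric multiplicity of $\lambda$ is at most $1$. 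Symmetry of $J_n$ then forces algebraic multiplicity equal to geometric multiplicity, hence also $1$, so the $n$ eigenvalues are distinct and the $n$ zeros of $P_n$ are simple.

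The step most prone to sign or indexing errors is the matrix form of the recurrence, since one must verify that truncating at position $n-1$ produces the clean eigenvalue equation only when $P_n(\lambda) = 0$; everything after that reduces to textbook linear algebra on irreducible tridiagonal matrices.
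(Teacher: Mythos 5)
Your strategy is sound, and it is worth noting that it is genuinely different from the route the paper gestures at: the paper states Proposition \ref{prop:zeros} without proof, deferring to the literature, where the standard argument uses positive-definiteness of the functional $\mathcal{L}$ and a sign-change count (if $P_n$ changed sign at fewer than $n$ points of the support, multiplying by the node polynomial of those points would produce a polynomial of degree less than $2n$ on which $\mathcal{L}$ is nonzero, contradicting orthogonality). Your matrix-theoretic route through the truncations $J_n$ is equally legitimate and in fact matches the spirit of Section \ref{sec:finiteOPS}, where the paper shows $p_n(x)=\det(xI_n-J_n)$ for the monic family. The functional-analytic proof has the advantage of also locating the zeros inside the support of the measure; your proof buys reality and simplicity from pure linear algebra with no measure in sight.

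However, there is one genuine jump. The sentence ``Counting degrees, the $n$ zeros of $P_n$ (counted with multiplicity) are exactly the $n$ eigenvalues of $J_n$ (counted with multiplicity)'' does not follow from what precedes it: you have only proved the inclusion that every zero of $P_n$ is an eigenvalue of $J_n$. Counting degrees cannot upgrade an inclusion to an equality with multiplicities --- a priori $P_n$ could have a double zero at one eigenvalue while some other eigenvalue of $J_n$ fails to be a zero of $P_n$ at all, and then your final step would not close, since ``$J_n$ has $n$ distinct eigenvalues and the zero set of $P_n$ is contained in them'' is perfectly consistent with $P_n$ having a repeated zero. Two quick patches, either of which completes the proof: (i) prove the reverse inclusion with the machinery you already set up --- your recursion shows any eigenvector $v$ of $J_n$ with eigenvalue $\lambda$ has $v_0\neq 0$ and is proportional to $\mathbf{P}(\lambda)$, and comparing the last row of $J_n v=\lambda v$ with the recurrence \eqref{SymTriTermNSec5} at index $n-1$ forces $a_{n-1}P_n(\lambda)v_0=0$, hence $P_n(\lambda)=0$ since $a_{n-1}\neq 0$; combined with your distinctness result this exhibits $n$ distinct zeros of the degree-$n$ polynomial $P_n$, so all zeros are simple; or (ii) show directly that $P_n$ is a nonzero scalar multiple of $\det(xI_n-J_n)$ by expanding the determinant along its last row and checking that both sides satisfy the same three-term recurrence with the same initial data --- this is exactly the computation the paper carries out for the monic polynomials in Section \ref{sec:finiteOPS}, and it makes your multiplicity claim literally true.
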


\begin{proposition}\label{prop:interlacezeros}
Let $x_{n,k}$ denote the $k$-th zero of $P_{n}(x)$. Then
the zeros of $P_n(x)$ and $P_{n+1}(x)$ interlace, i.e.
\[
x_{n+1,1}<x_{n,1}<x_{n+1,2}<\dots<x_{n,n}<x_{n+1,n+1}
\] for all $n=0,1,\dots, N-1$.
\end{proposition}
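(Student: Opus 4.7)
The plan is to prove the interlacing by induction on $n$, with the three-term recurrence \eqref{SymTriTermNSec5} as the main engine and Proposition~\ref{prop:zeros} supplying realness and simplicity of the zeros.

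First I would evaluate the recurrence at any zero $x_{n,k}$ of $P_n$ to obtain the key identity
\[
a_n P_{n+1}(x_{n,k}) = -a_{n-1} P_{n-1}(x_{n,k}).
\]
Since $a_n, a_{n-1} > 0$, this shows that $P_{n+1}(x_{n,k})$ and $P_{n-1}(x_{n,k})$ take opposite (and, in particular, nonzero) signs, where the nonvanishing of $P_{n-1}$ at $x_{n,k}$ is guaranteed by the inductive hypothesis (interlacing of the zeros of $P_{n-1}$ and $P_n$ rules out a common zero).

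For the inductive step, assume the zeros of $P_{n-1}$ and $P_n$ interlace. Between any two consecutive zeros $x_{n,k}$ and $x_{n,k+1}$ of $P_n$, the polynomial $P_{n-1}$ has exactly one (simple) zero and hence changes sign exactly once, so $P_{n-1}(x_{n,k})$ and $P_{n-1}(x_{n,k+1})$ have opposite signs. By the key identity, $P_{n+1}(x_{n,k})$ and $P_{n+1}(x_{n,k+1})$ also have opposite signs, so the intermediate value theorem supplies at least one zero of $P_{n+1}$ in each of the $n-1$ open intervals $(x_{n,k}, x_{n,k+1})$.

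To close the argument I need to locate the two remaining zeros of $P_{n+1}$, one to the left of $x_{n,1}$ and one to the right of $x_{n,n}$. Comparing leading terms in the recurrence, the leading coefficient of $P_n$ equals $1/(a_0 a_1 \cdots a_{n-1}) > 0$, so $P_{n+1}(x) \to +\infty$ as $x \to +\infty$ and has sign $(-1)^{n+1}$ as $x \to -\infty$. By the inductive hypothesis $P_{n-1}$ has no zeros outside $[x_{n-1,1}, x_{n-1,n-1}] \subset (x_{n,1}, x_{n,n})$, so the sign of $P_{n-1}$ at $x_{n,n}$ (resp.\ $x_{n,1}$) agrees with its sign at $+\infty$ (resp.\ $-\infty$). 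Plugging these signs into the key identity forces $P_{n+1}(x_{n,n})$ and $P_{n+1}(x_{n,1})$ to be opposite in sign to the behavior of $P_{n+1}$ at the respective infinity, so by the intermediate value theorem $P_{n+1}$ vanishes in each of the two flanking half-lines. Since $P_{n+1}$ has exactly $n+1$ real zeros, the $n+1$ zeros located so far are all of them, and they are arranged as claimed. The main obstacle is really just the sign bookkeeping at the outer zeros; the interior interlacing falls out immediately from the key identity. The base case $n=0$ is vacuous, and $n=1$ is an easy direct check using $a_1 P_2(x_{1,1}) = -a_0 < 0$ together with $P_2(x) \to +\infty$ on both sides.
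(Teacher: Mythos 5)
Your proof is correct, and there is nothing in the paper to compare it against: the paper states Proposition~\ref{prop:interlacezeros} without proof, referring the reader to the literature (\cite{Baik2007}; the argument you give is essentially the classical one found in \cite{Chihara}). Your key identity $a_nP_{n+1}(x_{n,k})=-a_{n-1}P_{n-1}(x_{n,k})$ is the right engine, and the delicate part --- the outer zeros --- is handled correctly: since the inductive hypothesis places all zeros of $P_{n-1}$ strictly inside $(x_{n,1},x_{n,n})$, the sign of $P_{n-1}$ at $x_{n,n}$ is $+1$ and at $x_{n,1}$ is $(-1)^{n-1}$, so the identity gives $P_{n+1}(x_{n,n})<0$ and $\operatorname{sign}P_{n+1}(x_{n,1})=(-1)^n$, each opposite to the sign of $P_{n+1}$ at the adjacent infinity ($+1$ and $(-1)^{n+1}$ respectively), and the count of $n+1$ located zeros against $\deg P_{n+1}=n+1$ forces exactly one zero per region and hence strict interlacing. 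You also correctly flagged the two points where blind versions of this argument usually leak: the nonvanishing of $P_{n-1}$ at the zeros of $P_n$ (which needs strictness of the previous interlacing) and the simplicity of the zero of $P_{n-1}$ in each interior gap (so that it changes sign there). Two trivial remarks: in the leading-term sentence you say ``the leading coefficient of $P_n$'' where what you use is that \emph{every} $P_m$ has positive leading coefficient $1/(a_0\cdots a_{m-1})$ (true, by the same computation); and for the final step $n=N-1$ the coefficient $a_{N-1}$ is the arbitrary constant the paper normalizes to $1>0$, so the sign argument applies there too.
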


\begin{remark}
    We note that Favard's theorem, Proposition \ref{prop:zeros} and Proposition \ref{prop:interlacezeros} also hold in the case $J$ is a semi-infinite Jacobi matrix, but we restrict ourselves to the finite case due to the nature of the application in quantum computing which we review.
\end{remark}
As a consequence of the uniqueness of orthogonal polynomial sequences, it is often convenient to work with the monic polynomials $p_n(x):=a_{-1}a_0a_1\dots a_{n-1}P_n(x)$ where $a_{-1}:=1$. These polynomials satisfy
\begin{equation}\label{eq:monicrec}
xp_n(x)=p_{n+1}(x)+b_np_n(x)+a^2_{n-1}p_{n-1}(x), \quad n=0, 1, 2,\dots, N-1
\end{equation} which corresponds to the following monic Jacobi matrix,
\[
\mathcal{J}=\begin{pmatrix}
b_0 & 1&0 & 0&\cdots   \\
     a^2_0 & b_1 & 1 &0&   \\
     0 & a^2_1 & b_2 & 1&\ddots  \\
      \vdots& & \ddots & \vdots&1\\
      0&0&0&a^2_{N-2}&b_{N-1}
\end{pmatrix}.
\] Let $J_n$ be the truncated $n\times n$ upper left sub-matrix of $\mathcal{J}$ and $I_n$ be the $n\times n$ identity matrix. Define polynomials of degree $n$ by $q_n(x)=\det(xI_n-J_n)$. Then the $q_n(x)$ satisfy \eqref{eq:monicrec} with $q_1(x)=x-b_0=p_1(x)$ and $q_2(x)=(x-b_0)(x-b_1)-a_0^2=p_2(x)$, hence we must have that $q_n(x)=p_n(x)$ for all $n=0,1,2,\dots, N$. Therefore, the OPS corresponding to $\mathcal{J}$ may also be defined by
\[
p_n(x)=\det(xI_n-J_n), \quad n=1,2,\dots,N.
\]
Using this view, it is clear that $p_N(x)$ is the characteristic polynomial of $\mathcal{J}$ and thus it's zeros (and hence the zeros of $P_n(x)$) are the eigenvalues of $\mathcal{J}$, which must be simple and real by Proposition \ref{prop:zeros}. 
\begin{remark}We remark here that the realness of the eigenvalues of $\mathcal{J}$ also follows from the fact that $\mathcal{J}$ is similar to $J$ and $J$ is Hermitian. Thus, $J$ (and hence $\mathcal{J}$) has real eigenvalues. However, the theory of orthogonal polynomials gives a quick proof as to the multiplicity of the eigenvalues.\end{remark}
It can also be seen that the eigenvectors of $J$ are given in terms of the corresponding orthogonal polynomials in the following way. Let $\{\lambda_k\}_{k=0}^{N-1}$ be the zeros of $P_N(x)$. Then \eqref{eq:monicrec} gives 
\begin{equation}\label{eq:recmatrix}
\begin{pmatrix}
b_0 & a_0 & 0 &\cdots &0&0   \\
     a_0 & b_1 & a_1 & \cdots&0 &0 \\
     0 & a_1 & b_2 & \cdots&0&0  \\
      \vdots& \vdots& \vdots&\ddots &\vdots&\vdots\\
      0&0&0&\cdots&b_{N-2}&a_{N-2}\\
      0&0&0&\cdots&a_{N-2}&b_{N-1}
\end{pmatrix}\begin{pmatrix}P_0(\lambda_k)\\
    P_1(\lambda_k)\\
    P_2(\lambda_k)\\
    \vdots\\
    P_{N-2}(\lambda_k)\\
    P_{N-1}(\lambda_k)
  
\end{pmatrix}+\begin{pmatrix}
    0\\
    0\\
    0\\
    \vdots\\
    0\\
    P_N(\lambda_k)
      \end{pmatrix}=\lambda_k\begin{pmatrix}P_0(\lambda_k)\\
    P_1(\lambda_k)\\
    P_2(\lambda_k)\\
    \vdots\\
    P_{N-2}(\lambda_k)\\
    P_{N-1}(\lambda_k)
  
\end{pmatrix}
\end{equation}
where $P_N(\lambda_k)=0$, thus equation \eqref{eq:recmatrix} is equivalent to
\begin{equation}\label{eq:evector}
Jv_k=\lambda_kv_k \text{ where }  v_k=[P_0(\lambda_k), P_1(\lambda_k),\dots,P_{N-1}(\lambda_k)]^{\top}
\end{equation} where $v^{\top}$ denotes the transpose of $v$. We highlight here that $v_k$ is indeed an eigenvector since by Proposition \ref{prop:interlacezeros}, $P_{N-1}(\lambda_k)\neq 0$ for any $k=0,1,2,\dots,N-1$. 


\begin{example}\label{ex:krawtchouk}(Krawtchouk Jacobi Matrix) We recall the monic Krawtchouk polynomials given in Definition \ref{def:krawtchouk}.
  They satisfy the following three-term recurrence relation
\begin{equation}\label{eq:krawtchoukmonicrec}
   xK_n(x;p,M)=K_{n+1}(x;p,M)+\frac{M}{2}K_n(x;p,M)+\frac{(M+1-n)n}{4}K_{n-1}(x;p,M).
   \end{equation}
For example, when $M=4$ and $p=1/2$, the associated monic Jacobi matrix is  
   \[
\mathcal{J}=\begin{pmatrix}
2 & 1&0 & 0&0   \\
     1 & 2 & 1 &0&0   \\
     0 & 3/2 & 2 & 1&0  \\
      0&0& 3/2 & 2&1\\
      0&0&0&1&2
\end{pmatrix}.
\]
Then $K_5(x;1/2,4)=\det(xI-\mathcal{J})=-x^5+10x^4-35x^3+50x^2-24x$ which has zeros at $x=0,1,2,3,4$ hence the eigenvalues of $J$ are $0,1,2,3$ and $4$ with corresponding eigenvectors $v_k=(K_0(k), K_1(k), K_2(k), K_3(k), K_4(k))^{\top}$, $k=0,1,2,3,4$.

\end{example}

\section{Perfect State Transfer}\label{sec:PST}
Recall that at path graph $G$ with $N$ vertices is a connected, undirected graph where two end vertices have degree 1 and the remaining $N-2$ vertices have degree 2. This means that $G$ may be drawn so that its vertices all lie on a straight line and are only connected to their nearest neighbors. We can associate Jacobi matrices to path graphs in the following way. Label the vertices of $G$ by $0, 1, 2,\dots, N-1$, and let $w_{jk}$ denote the weight of the edge connecting vertex $j$ to vertex $k$. We will assume the weights satisfy $w_{jk}>0$ for $j\neq k$. Then the \textit{adjacency matrix} of $G$ is the $N\times N$ matrix defined by $J=(w_{jk})_{j,k=0}^{N-1}.$ One can see that since each vertex is connected to only its nearest neighbors, $J$ will be tridiagonal and hence a Jacobi matrix. In the case of path graphs, $G$ is undirected, hence $w_{jk}=w_{kj}$ and therefore $J$ is symmetric. 

It is natural to use graph theory to model quantum communication by letting the vertices represent qubits and the edge weights represent couplings (how one qubit interacts with another). In the case of path graphs, $J$ represents the Hamiltonian of a spin chain.

 Let ${\bf e}_n=(0,0,\dots,0,1,0,\dots, 0)^{\top}$ be the standard basis vector of $\mathbb{C}^N$ with 1 in the $n+1$-th entry and zeros elsewhere, $n=0, 1, 2, \dots, N-1$.
For example, ${\bf e}_0=(1,0,0,\dots,0)^{\top}$ and ${\bf e}_{N-1}=(0,0,\dots,0,1)^{\top}$.

\begin{definition}[Perfect State Transfer]\label{def:PST} Let $j$ and $k$ be vertices of a finite graph and $J$ be its adjacency matrix. Then \textit{perfect state transfer (PST)} from vertex $n$ to vertex $m$ occurs if there exists some time $t_0>0$ such that \[\left|{\bf e}^{\top}_m\exp(it_0J){\bf e}_n\right|^2=1.\] 
\end{definition} 
\begin{remark}
    PST between vertices $n$ and $m$ means that an initial state where only qubit $n$ is excited, evolves with probability 1 to a state where only qubit $m$ is excited.
\end{remark}
\begin{remark}
We also note that definition \ref{def:PST} is equivalent to 
\[
e^{it_0J}{\bf e}_n=e^{i\phi}{\bf e}_m
\]for some real number $\phi$.  
\end{remark}

Many results have been formed concerning quantum spin chains with $N$ qubits. It was shown in \cite{Christandl_2004}
that for unweighted paths where the Hamiltonian is represented by the adjacency matrix, PST between endpoints can only occur for $N \leq 3$. Therefore, considering weighted paths became of interest and in 2005, Christandl et al showed that it is possible to achieve PST between endpoints in this case by allowing couplings between qubits which correspond to adding specific weights to the edges of the path (see \cite{Christandl_2005}). Because the graph Hamiltonian for a path graph with positive weights is an $N\times N$ Jacobi matrix with positive off-diagonal entries, there is a deep connection between PST and the theory of orthogonal polynomials. Below, we provide several examples of where OPSs have proven to be useful for studying PST in spin chains.
\begin{lemma}\label{lemma:PST}
Let $G$ be a path graph on $N$ vertices, $J$ its weighted adjacency matrix and $\{P_n(x)\}_{n=0}^{N}$ be the OPS generated by $J$. Denote the zeros of $P_N(x)$ by $
\lambda_{k}$, $k=0,1, 2, \dots, N-1$.  If there is PST between vertices $n$ and $m$ at time $t_0$, then  \[
    P_m(\lambda_{k})=e^{-i\phi}e^{i t_0\lambda_{k}}P_{n}(\lambda_{k})
    \] for all $k=0,1,2,\dots, N-1$ and some phase factor $\phi \in \mathbb{R}$.
\end{lemma}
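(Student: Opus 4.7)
The plan is to translate the PST condition into a statement about eigenvector expansions, using the fact from equation \eqref{eq:evector} that the eigenvectors of $J$ are built explicitly from the orthogonal polynomials evaluated at the spectrum.

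First, I would invoke the second remark after Definition \ref{def:PST} to rewrite the PST hypothesis as
\[
e^{it_0 J}\mathbf{e}_n = e^{i\phi}\mathbf{e}_m.
\]
Next, since $J$ is Hermitian with simple spectrum (by Proposition \ref{prop:zeros}), its eigenvectors $v_k = (P_0(\lambda_k), P_1(\lambda_k),\dots, P_{N-1}(\lambda_k))^\top$ from \eqref{eq:evector} form an orthogonal basis of $\mathbb{C}^N$. Writing $\|v_k\|^2 = \sum_{j=0}^{N-1} P_j(\lambda_k)^2$, I would expand the standard basis vectors as
\[
\mathbf{e}_n = \sum_{k=0}^{N-1} \frac{P_n(\lambda_k)}{\|v_k\|^2}\, v_k, \qquad \mathbf{e}_m = \sum_{k=0}^{N-1} \frac{P_m(\lambda_k)}{\|v_k\|^2}\, v_k,
\]
since the $(n{+}1)$-st entry of $v_k$ is $P_n(\lambda_k)$.

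Then I would apply $e^{it_0 J}$ termwise, using $Jv_k = \lambda_k v_k$, to get
\[
e^{it_0 J}\mathbf{e}_n = \sum_{k=0}^{N-1} \frac{e^{it_0 \lambda_k} P_n(\lambda_k)}{\|v_k\|^2}\, v_k.
\]
Setting this equal to $e^{i\phi}\mathbf{e}_m$ and comparing coefficients in the basis $\{v_k\}$ (which is legitimate by linear independence of the $v_k$) yields
\[
e^{it_0\lambda_k} P_n(\lambda_k) = e^{i\phi} P_m(\lambda_k)
\]
for each $k$, which is the claimed identity after rearrangement.

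There is no real obstacle here; the main thing to be careful about is justifying that the eigenvectors $v_k$ actually span $\mathbb{C}^N$ (so that coefficient comparison is valid), which follows from the simplicity of the zeros of $P_N$ in Proposition \ref{prop:zeros} together with Hermiticity of $J$, and from $P_{N-1}(\lambda_k)\neq 0$ (guaranteed by the interlacing Proposition \ref{prop:interlacezeros}) which ensures $v_k \neq 0$.
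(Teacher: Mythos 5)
Your proposal is correct and takes essentially the same approach as the paper, which sketches the proof via the spectral decomposition $J=PDP^{\top}$ with orthonormal eigenvectors $\tilde{v}_k=v_k/\|v_k\|$ built from the polynomial values $P_j(\lambda_k)$ as in \eqref{eq:evector}; your expansion of $\mathbf{e}_n$ and $\mathbf{e}_m$ in the eigenbasis $\{v_k\}$ followed by coefficient comparison is that same diagonalization argument written out in full. One small remark: $v_k\neq 0$ already follows from its first entry being $P_0(\lambda_k)=1$, so the appeal to interlacing and $P_{N-1}(\lambda_k)\neq 0$ is not actually needed for nonvanishing, though it is harmless.
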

The proof utilizes the fact that $J$ can be diagonalized as $J=PDP^{\top}$, where \newline $D=$ diag$\{\lambda_0,\lambda_1,\dots,\lambda_{N-1}\}$ is a diagonal matrix whose entries are the zeros of $P_N(x)$, and $P$ is an $N\times N$ matrix whose columns are orthonormal eigenvectors $\tilde{v}_k=\frac{v_k}{\|v_k\|}$ corresponding to $\lambda_k$, where the $v_k$ are as in \eqref{eq:evector} (see \cite{Kirkland2019} or \cite{vinet2012}).

\begin{remark}\label{remark:pstendpoints}
In the case of PST between endpoints, the converse of the Lemma \ref{lemma:PST} is also true and thus PST occurs between nodes 0 and $N-1$ at time $t_0$ if and only if $P_{N-1}(\lambda_k)=e^{-i\phi}e^{it_0\lambda_k}$ (see Theorem 6.1 of \cite{DEREVYAGIN21} for a detailed proof).

\end{remark}
The following well known theorem translates the question of PST into a question about the eigenvalues of $J$.

\begin{theorem} Let $J$ be the weighted adjacency matrix of a path graph on $N$ vertices and let
$\{\lambda_k\}_{k=0}^{N-1}$ denote the (distinct) eigenvalues of $J$. If $J$ achieves PST at time $t_0$ then the eigenvalues satisfy $\lambda_k-\lambda_{k-1}=(2m_k+1)\pi/t_0$ where $m_k$ is a nonegative integer.
\end{theorem}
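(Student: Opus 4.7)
The plan is to invoke Lemma \ref{lemma:PST} in the special case of PST between the two endpoints of the path (which is the setting in which this spectral condition is classically stated; cf.\ Remark \ref{remark:pstendpoints}) and then to convert the resulting identity into a congruence for the $\lambda_k$ modulo $\pi/t_0$. Since $P_0(\lambda_k)\equiv 1$, the lemma specialized to $n=0$, $m=N-1$ gives
\[
P_{N-1}(\lambda_k)=e^{-i\phi}e^{it_0\lambda_k},\qquad k=0,1,\dots,N-1.
\]
The left-hand side is real (the coefficients of the three-term recurrence are real) while the right-hand side is a complex number of modulus one, so $P_{N-1}(\lambda_k)\in\{-1,+1\}$ and $e^{i(t_0\lambda_k-\phi)}=\pm 1$. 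Consequently there exist integers $n_k$ with
\[
t_0\lambda_k=\phi+n_k\pi,\qquad P_{N-1}(\lambda_k)=(-1)^{n_k}.
\]

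Ordering the eigenvalues as $\lambda_0<\lambda_1<\dots<\lambda_{N-1}$ and subtracting consecutive identities yields $t_0(\lambda_k-\lambda_{k-1})=(n_k-n_{k-1})\pi$, so it remains to show that the integer $n_k-n_{k-1}$ is positive and odd. Positivity is immediate from the ordering. For the parity, I would invoke Proposition \ref{prop:interlacezeros}: the zeros of $P_{N-1}$ strictly interlace those of $P_N$, so between the consecutive zeros $\lambda_{k-1}$ and $\lambda_k$ of $P_N$ sits exactly one zero of $P_{N-1}$. This forces $P_{N-1}$ to change sign exactly once on $(\lambda_{k-1},\lambda_k)$, giving $(-1)^{n_k}=-(-1)^{n_{k-1}}$. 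Hence $n_k-n_{k-1}$ is an odd positive integer; writing it as $2m_k+1$ with $m_k\in\mathbb{Z}_{\geq 0}$ produces the claimed formula $\lambda_k-\lambda_{k-1}=(2m_k+1)\pi/t_0$.

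The principal obstacle is the sign-alternation step, which is exactly what the interlacing proposition delivers; without strict interlacing (and the guarantee $P_{N-1}(\lambda_k)\neq 0$ already highlighted after \eqref{eq:evector}), one could only conclude that $n_k-n_{k-1}$ is an integer rather than an odd integer, losing the crucial $2m_k+1$ structure. Everything else is bookkeeping of the phase $\phi$, which is harmless since it cancels when passing to differences of eigenvalues.
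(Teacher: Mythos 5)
Your proof is correct and takes essentially the same route as the paper: both specialize the endpoint identity $P_{N-1}(\lambda_k)=e^{-i\phi}e^{it_0\lambda_k}$, use the realness of the polynomials to force $P_{N-1}(\lambda_k)=\pm 1$, and invoke the interlacing of the zeros of $P_{N-1}$ and $P_N$ (Proposition \ref{prop:interlacezeros}) to obtain the sign alternation that makes each gap an odd multiple of $\pi/t_0$. Your explicit bookkeeping with the integers $n_k$ simply makes transparent the paper's implicit use of the eigenvalue ordering to conclude $m_k\geq 0$.
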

As in Lemma \ref{lemma:PST}, the proof uses the fact that the eigenvalues and eigenvectors of $J$ can be given in terms of the corresponding orthogonal polynomials.
It also uses the interlacing property of the zeros of $P_n(x)$ and $P_{n+1}(x)$ which is illustrated below. \begin{proof}
Assume PST is realized at time $t_0$. Then by 
Remark \ref{remark:pstendpoints},
\begin{equation}\label{eq:endpointPST2}
P_{N-1}(\lambda_k)=e^{-i\phi}e^{it_0\lambda_k},\quad k=0, 1,2,\dots, N-1.
\end{equation} 
Since the $J$ is real, the corresponding OPS is real and therefore 
\begin{equation}\label{eq:realpolyPST}
P_{N-1}(\lambda_k)=\pm 1.
\end{equation} Recall that by proposition \ref{prop:interlacezeros}, the zeros of $P_{N}(x)$ and $P_{N-1}(x)$ interlace and thus \eqref{eq:realpolyPST} is equivalent to 
\begin{equation}
    P_{N-1}(\lambda_k)=(-1)^{N-1+k}, \quad k=0, 1, 2, \dots, N-1.
\end{equation} Hence, \eqref{eq:endpointPST2} implies
\begin{align*}
e^{-i\phi}e^{it_0\lambda_k}&=(-1)^{N-1+k}
\end{align*} and thus
\[
e^{it_0(\lambda_{k}-\lambda_{k-1})}=-1
\] for $k= 1,2,\dots, N-1$. Therefore, 
\[
t_0(\lambda_{k}-\lambda_{k-1})=(2m_k+1)\pi
\] for all $k= 1,2,\dots, N-1$.
\end{proof}
In \cite{KAY_2010}, it was shown that the above behavior of consecutive eigenvalues along with mirror symmetry (symmetry about the anti-diagonal) of the Jacobi matrix actually characterize the spin chains with nearest neighbor interactions which exhibit PST. 
\begin{theorem}[Kay \cite{KAY_2010}]  For a spin chain on $N$ vertices, PST between endpoints occurs at time $t_0$ if and only if $J$ is mirror symmetric and the eigenvalues satisfy $\lambda_k-\lambda_{k-1}=(2m_k+1)\pi/t_0$.
\end{theorem}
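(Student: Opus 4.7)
The plan is to read both directions of the equivalence off Remark \ref{remark:pstendpoints}, which says that PST between endpoints at time $t_0$ is equivalent to the pointwise identity $P_{N-1}(\lambda_k) = e^{-i\phi}e^{it_0\lambda_k}$ holding for all $k$ with some $\phi \in \mathbb{R}$. The bridge to mirror symmetry is the auxiliary lemma that for a real Jacobi matrix with simple spectrum, $J$ is mirror symmetric if and only if $P_{N-1}(\lambda_k) \in \{-1,+1\}$ for every eigenvalue $\lambda_k$. For the ``only if'' direction, let $R$ be the antidiagonal permutation matrix; then $RJR = J$, so $R$ commutes with $J$, and since the spectrum is simple and $R^2 = I$ each eigenvector $v_k = (P_0(\lambda_k), \ldots, P_{N-1}(\lambda_k))^{\top}$ satisfies $Rv_k = \pm v_k$, giving $P_{N-1}(\lambda_k) = \pm P_0(\lambda_k) = \pm 1$. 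For the ``if'' direction, the spectral measure of $J$ at $\mathbf{e}_0$ weights $\lambda_k$ by $1/\|v_k\|^2$, while its spectral measure at $\mathbf{e}_{N-1}$ weights $\lambda_k$ by $P_{N-1}(\lambda_k)^2/\|v_k\|^2$; when $P_{N-1}(\lambda_k)^2 = 1$ the two measures coincide, and standard inverse-spectral uniqueness of a Jacobi matrix from its spectral measure forces $RJR = J$.

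For the forward implication of Kay's theorem, assume PST at time $t_0$. Since $J$ is real, $P_{N-1}$ has real coefficients, and Remark \ref{remark:pstendpoints} then forces the unit-modulus numbers $P_{N-1}(\lambda_k) = e^{-i\phi}e^{it_0\lambda_k}$ to be real, hence $\pm 1$. The lemma yields mirror symmetry, and exactly as in the previous theorem, interlacing (Proposition \ref{prop:interlacezeros}) pins down $P_{N-1}(\lambda_k) = (-1)^{N-1+k}$, from which the spacing $t_0(\lambda_k - \lambda_{k-1}) = (2m_k+1)\pi$ follows.

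For the converse, assume mirror symmetry together with the spacing condition. The lemma combined with interlacing gives $P_{N-1}(\lambda_k) = (-1)^{N-1+k}$, while the spacing yields $e^{it_0(\lambda_k - \lambda_{k-1})} = -1$, so $e^{it_0\lambda_k} = (-1)^k e^{it_0\lambda_0}$. Choosing $\phi \in \mathbb{R}$ so that $e^{i\phi}(-1)^{N-1} = e^{it_0\lambda_0}$, one gets $e^{-i\phi}e^{it_0\lambda_k} = (-1)^{N-1+k} = P_{N-1}(\lambda_k)$, and Remark \ref{remark:pstendpoints} delivers PST at time $t_0$.

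The main obstacle is the ``if'' half of the auxiliary lemma: promoting the pointwise data $P_{N-1}(\lambda_k)^2 = 1$ to the global symmetry $RJR = J$ is a genuine inverse-spectral statement, requiring uniqueness of the Jacobi matrix from its $\mathbf{e}_0$-spectral measure. Once that is in hand, the remaining work is exactly the interlacing-plus-exponential bookkeeping already carried out in the proof of the previous theorem, now run in both directions.
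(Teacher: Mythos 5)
Your proof is correct, and it supplies something the paper itself does not: the paper states Kay's theorem without proof, saying only that the argument \say{relies on analysis of the eigenvalues and eigenvectors of $J$} and citing \cite{KAY_2010}, while proving only the necessity of the spacing condition in the unnamed theorem preceding it. Your forward direction reuses that theorem's interlacing-plus-exponential bookkeeping essentially verbatim, and your converse correctly assembles PST from the equivalence in Remark \ref{remark:pstendpoints}; the genuinely new ingredient is your auxiliary lemma identifying mirror symmetry with $P_{N-1}(\lambda_k)=\pm 1$. Both halves of that lemma check out: simplicity of the spectrum (Proposition \ref{prop:zeros}) forces each eigenvector $v_k$ to satisfy $Rv_k=\pm v_k$, giving $P_{N-1}(\lambda_k)=\pm P_0(\lambda_k)=\pm 1$; and the inverse-spectral step you flag as the main obstacle is, in this finite setting, elementary---a measure on $N$ points determines its orthonormal polynomials by Gram--Schmidt and hence the recurrence coefficients, $RJR$ is again a Jacobi matrix with positive off-diagonal entries, and its spectral measure at $\mathbf{e}_0$ is the spectral measure of $J$ at $\mathbf{e}_{N-1}$, so equality of the two measures yields $RJR=J$. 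It is worth noting that your lemma is essentially the OPS reformulation the paper attributes to Vinet and Zhedanov \cite{vinet2012}: since $w(\lambda_k)=\|v_k\|^{-2}$ and the confluent Christoffel--Darboux identity gives $\|v_k\|^2=c\,P_{N-1}(\lambda_k)P_N'(\lambda_k)$ for a constant $c>0$, their weight condition $w(\lambda_k)=k_{N-1}/|P_N'(\lambda_k)|$ is equivalent to your condition $P_{N-1}(\lambda_k)^2=1$. What your route buys is a complete, self-contained proof inside the paper's orthogonal-polynomial framework; what Kay's original eigen-analytic route buys is independence from that machinery.
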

The proof relies on analysis of the eigenvalues and eigenvectors of $J$, however it was shown in \cite{vinet2012} that this characterization can also be stated explicitly in terms of the associated OPS.
\begin{lemma}[Vinet and Zhedanov 2012 \cite{vinet2012}]
Let $J$ be the weighted adjacency matrix of a path graph with $N$ vertices and let $w(x)$ be the weight function of the corresponding OPS. Then $J$ exhibits PST at time $t_0$ if and only if $\lambda_k-\lambda_{k-1}=\frac{(2m_k+1)\pi}{t_0}$ and the orthogonality weights satisfy $w(\lambda_k)=\frac{k_{N-1}}{\|P'_N(\lambda_k)\|}$ where \newline $P_N(x)=(x-\lambda_0)(x-\lambda_1)\dots(x-\lambda_{N-1})$ and $k_{N-1}$ is a normalization constant so that $\sum_{k=0}^{N-1}w(\lambda_k)=1$.
\end{lemma}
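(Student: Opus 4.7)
The plan is to bridge this characterization with the preceding Kay-type characterization by combining Remark \ref{remark:pstendpoints} with a discrete Gauss-type quadrature identity. The quadrature formula will express $w(\lambda_k)$ in terms of $P_{N-1}(\lambda_k)$ and $P_N'(\lambda_k)$, and the main point becomes that endpoint-PST is equivalent to $|P_{N-1}(\lambda_k)|=1$ for every $k$, which is precisely what the proposed weight formula detects.

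First I would handle the forward direction. Assuming PST at time $t_0$, Remark \ref{remark:pstendpoints} gives $P_{N-1}(\lambda_k) = e^{-i\phi} e^{i t_0 \lambda_k}$ for all $k$. Because $J$ is real the $P_n$ are real polynomials, so $P_{N-1}(\lambda_k)\in\mathbb{R}$, and having modulus $1$ forces $P_{N-1}(\lambda_k)=\pm 1$; combined with the interlacing Proposition \ref{prop:interlacezeros} and the usual sign-alternation argument already used in the previous theorem, this pins $P_{N-1}(\lambda_k)=(-1)^{N-1+k}$ and simultaneously recovers $\lambda_k-\lambda_{k-1}=(2m_k+1)\pi/t_0$.

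Next I would establish the quadrature identity linking $w(\lambda_k)$, $P_{N-1}(\lambda_k)$, and $P_N'(\lambda_k)$. From the eigenvector description \eqref{eq:evector} and orthonormality of the normalized $\tilde v_k=v_k/\|v_k\|$, one has $w(\lambda_k)=\|v_k\|^{-2}=\bigl(\sum_{n=0}^{N-1}P_n(\lambda_k)^2\bigr)^{-1}$ for the orthonormal OPS from \eqref{SymTriTermNSec5}. The confluent Christoffel--Darboux identity yields
\[
\sum_{n=0}^{N-1} P_n(\lambda_k)^2 \;=\; a_{N-1}\, P_N'(\lambda_k)\, P_{N-1}(\lambda_k),
\]
hence $w(\lambda_k) = 1/\bigl(a_{N-1}P_N'(\lambda_k)P_{N-1}(\lambda_k)\bigr)$. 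Rescaling $P_N$ to its monic form as stated in the lemma and absorbing the resulting constants into a single factor $k_{N-1}$, the condition $|P_{N-1}(\lambda_k)|=1$ is then equivalent to $w(\lambda_k)=k_{N-1}/|P_N'(\lambda_k)|$, with $k_{N-1}$ uniquely fixed by $\sum_k w(\lambda_k)=1$.

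For the converse I would simply run the argument in reverse: given the spacing condition and the weight identity, the quadrature formula forces $|P_{N-1}(\lambda_k)|=1$, so by interlacing $P_{N-1}(\lambda_k)=(-1)^{N-1+k}$; the spacing condition then produces a phase $\phi$ for which $P_{N-1}(\lambda_k)=e^{-i\phi}e^{it_0\lambda_k}$, and Remark \ref{remark:pstendpoints} delivers PST at time $t_0$. The main obstacle I anticipate is the normalization bookkeeping: the lemma states $P_N$ as monic while the quadrature identity is cleanest for the orthonormal version and uses the convention $a_{N-1}=1$ fixed after \eqref{SymTriTermNSec5}, so one must track carefully where the absolute value on $P_N'(\lambda_k)$ arises and verify that all residual constants really do collapse into a single $k$-independent $k_{N-1}$ once $|P_{N-1}(\lambda_k)|=1$ is imposed.
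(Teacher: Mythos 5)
Your proposal is correct, and it is worth noting that the paper itself does not actually prove this lemma --- it states it and defers entirely to \cite{vinet2012} --- so your argument must be judged against the original, and it is essentially the standard route taken there. The key ingredients all check out: $w(\lambda_k)=\|v_k\|^{-2}$ follows from the rows of the orthogonal eigenvector matrix being orthonormal; the confluent Christoffel--Darboux identity at a zero of $P_N$ gives $w(\lambda_k)=\bigl(a_{N-1}P_N'(\lambda_k)P_{N-1}(\lambda_k)\bigr)^{-1}$; and the signs cooperate, since interlacing gives $\operatorname{sgn}p_N'(\lambda_k)=(-1)^{N-1-k}$ while $\operatorname{sgn}P_{N-1}(\lambda_k)=(-1)^{N-1+k}$, so the product is positive and the weight formula with the absolute value is exactly equivalent to $|P_{N-1}(\lambda_k)|$ being constant in $k$. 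The one step you gloss over is why that constant must equal $1$ in the converse direction: the weight condition together with the quadrature identity only yields $P_{N-1}(\lambda_k)=c\,(-1)^{N-1+k}$ for some $k$-independent $c>0$, and $c=1$ is forced by the orthonormality relation $\sum_{k=0}^{N-1}P_{N-1}(\lambda_k)^2\,w(\lambda_k)=1$, which holds simultaneously with $\sum_k w(\lambda_k)=1$; this is the precise sense in which ``the residual constants collapse,'' and it deserves one explicit line. With that added, the forward direction (Remark \ref{remark:pstendpoints} plus the same sign-alternation argument the paper uses for the eigenvalue-spacing theorem) and the reverse direction (reconstructing the phase via $e^{it_0\lambda_k}=(-1)^k e^{it_0\lambda_0}$ from the spacing condition and invoking the endpoint converse in Remark \ref{remark:pstendpoints}) are both sound.
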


This equivalent condition is useful in multiple ways. First, it provides a way to construct a Jacobi matrix with the PST property when given a specific set of eigenvalues. Second, it helps provide a method to obtain new spin chains exhibiting PST from old ones by removing points from the spectrum (see \say{spectral surgery} in \cite{vinet2012}). This is related to the well known Christoffel transformation, which provides conditions when one can modify the orthogonality measure of an OPS and obtain a new OPS with respect to this new measure (see  \cite{BAILEY2023105876} and \cite{Bueno2004DarbouxTA} for information on discrete Darboux transformation). 

Since conditions on the eigenvalues of $J$ are crucial to determining if PST occurs, analyzing the behavior of  eigenvalues has drawn interest. In \cite{Kirkland2019}, Kirkland et al.
state that if a weighted path has no potentials (loops), then the graph must be bipartite. They then then use properties of bipartite graphs to conclude that the eigenvalues of the corresponding Jacobi matrix are symmetric about zero. One may come to the same conclusion bypassing graph theory and using only facts about OPSs. 
\begin{lemma}
    Let $J$ be an $N\times N$ Jacobi matrix with positive off diagonal entries and zeros along the main diagonal. Then its eigenvalues are symmetric about zero.
\end{lemma}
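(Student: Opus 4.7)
The plan is to exploit the three-term recurrence satisfied by the associated OPS $\{P_n(x)\}_{n=0}^N$ together with the hypothesis $b_n = 0$ to show that each $P_n$ has a definite parity, namely $P_n(-x) = (-1)^n P_n(x)$. Since the eigenvalues of $J$ coincide with the zeros of $P_N$ (as established in Section \ref{sec:finiteOPS}), this parity statement will immediately imply that the spectrum is closed under negation, and hence symmetric about zero.

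To carry this out, first I would recall that under the hypotheses, the orthonormal polynomials satisfy
\[
a_n P_{n+1}(x) + a_{n-1}P_{n-1}(x) = xP_n(x), \qquad n=0,1,\dots,N-1,
\]
with $P_{-1}(x)=0$, $P_0(x)=1$, and all $a_n > 0$. I would then prove by induction on $n$ that $P_n(-x) = (-1)^n P_n(x)$. The base cases $n=0$ and $n=1$ are immediate since $P_0 \equiv 1$ is even and $P_1(x) = x/a_0$ is odd. For the inductive step, substitute $-x$ into the recurrence and apply the induction hypothesis to $P_n$ and $P_{n-1}$; after a short simplification of signs, the bracketed expression reassembles into $(-1)^{n+1}$ times the recurrence at $x$, yielding $P_{n+1}(-x) = (-1)^{n+1}P_{n+1}(x)$.

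With the parity established, apply it at $n = N$: $P_N(-x) = (-1)^N P_N(x)$. Thus if $\lambda$ is a zero of $P_N$, so is $-\lambda$, and since the eigenvalues of $J$ are exactly the zeros of $P_N$ (Section \ref{sec:finiteOPS}), the spectrum of $J$ is symmetric about zero. In particular, when $N$ is odd, $0$ must itself be an eigenvalue.

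There is no serious obstacle here; the only thing to be careful about is the bookkeeping of signs in the inductive step, since $(-1)^n$, $(-1)^{n-1}$, and the sign coming from $(-x)$ all interact. The proof uses nothing beyond the three-term recurrence with vanishing diagonal and the identification of the eigenvalues of $J$ with the zeros of $P_N$ from Section \ref{sec:finiteOPS}; as a side remark, one could alternatively give a one-line matrix proof by conjugating $J$ by $D = \mathrm{diag}(1,-1,1,-1,\dots)$ to obtain $DJD^{-1} = -J$, but the OPS-based argument is more in keeping with the spirit of the paper.
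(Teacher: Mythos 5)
Your proof is correct and follows essentially the same route as the paper: both establish the parity $P_n(-x)=(-1)^nP_n(x)$ by induction on the three-term recurrence with vanishing diagonal, then conclude that the zeros of $P_N(x)$, i.e.\ the eigenvalues of $J$, are symmetric about zero. Your sign bookkeeping in the inductive step is sound (and in fact cleaner than the paper's, which drops the coefficients $a_n$ in its inductive step), and your aside about conjugating by $D=\mathrm{diag}(1,-1,1,-1,\dots)$ is a valid alternative the paper does not mention.
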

\begin{proof}
Suppose $J$ has zeros along the diagonal and let $\{P_n(x)\}_{n=0}^{N}$ be the corresponding OPS. Then the associated recurrence relation becomes 
\[
xP_n(x)=a_nP_{n+1}(x)+a_{n-1}P_{n-1}(x), \quad n=0,1,2,\dots, N-1.
\] We claim that if $n$ is odd (even), then $P_n(x)$ is an odd (even) function.
Let $n=1$. Then $P_1(x)=a_0^{-1}x$ hence is odd. Similarly, $P_2(x)=a_2^{-1}[xP_1(x)-a_0P_0(x)]$ where $P_0(x)$ and $xP_1(x)$ are both even, hence $P_2(x)$ is even.

Now assume the statement holds for all $n\leq k$. Then $P_{k+1}(x)=xP_{k}(x)-a_{k-1}P_{k-1}(x)$, thus if $k$ is odd, $xP_{k}(x)$ and $P_{k-1}(x)$ are even, thus $P_{k+1}(x)$ is even. A similar argument holds for $k$ even. 

Therefore, if $x_0$ is a zero of $P_k(x)$, then $P_k(-x_0)=\pm P_k(x_0)$ hence $-x_0$ is also a zero of $P_k(x)$. Since the zeros of $P_k(x)$ are real, we have that the zeros of $P_k(x)$ are symmetric about 0. In particular, since the zeros of $P_N(x)$ are the eigenvalues of $J$, we have that the eigenvalues of $J$ are symmetric about zero.
\end{proof}
Kirkland et. al also made the observation that PST  between endpoints, implies PST between interior vertices.
\begin{proposition}[Kirkland et. al \cite{Kirkland2019}]\label{prop:kirkland}
For a weighted path of length $N$ with or without potentials, PST between vertices 1 and $N$ implies PST between vertices $j$ and $N+1-j$ for each $j=2, \dots, N-1$. If for some $2\leq j \leq N-1$, there is PST between vertices $j$ and $N+1-j$, and if in addition none of the eigenvectors of the Hamiltonian has a zero entry in the $j$-th position, then the converse holds. 
\end{proposition}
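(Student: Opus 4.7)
The plan is to treat the forward and converse implications separately, leaning on Lemma \ref{lemma:PST}, Remark \ref{remark:pstendpoints}, and Kay's theorem; I use the 1-indexed vertex labels of the proposition, which correspond to basis vectors $\mathbf{e}_{j-1}$ and polynomials $P_{j-1}(\lambda_k)$.

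For the forward direction, assume PST between vertices $1$ and $N$ at time $t_0$ with phase $\phi$. Remark \ref{remark:pstendpoints} gives $P_{N-1}(\lambda_k) = e^{-i\phi} e^{it_0\lambda_k}$, which is real, so $e^{it_0\lambda_k} = e^{i\phi}\sigma_k$ with $\sigma_k = P_{N-1}(\lambda_k) \in \{+1,-1\}$. Kay's theorem forces $J$ to be mirror symmetric, so the reversal $R\colon\mathbf{e}_\ell\mapsto\mathbf{e}_{N-1-\ell}$ commutes with $J$; simplicity of the spectrum gives $R v_k = \pm v_k$ for each eigenvector $v_k = (P_0(\lambda_k),\ldots,P_{N-1}(\lambda_k))^\top$, and matching first entries gives $R v_k = \sigma_k v_k$. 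Comparing $\ell$-th entries yields the reflection identity $P_{N-1-\ell}(\lambda_k) = \sigma_k P_\ell(\lambda_k)$; specializing $\ell = j-1$,
\[
P_{N-j}(\lambda_k) \;=\; \sigma_k P_{j-1}(\lambda_k) \;=\; e^{-i\phi}e^{it_0\lambda_k}P_{j-1}(\lambda_k),
\]
which by Lemma \ref{lemma:PST} (whose spectral-decomposition argument is in fact an if-and-only-if) is exactly PST between vertices $j$ and $N+1-j$ at the same time $t_0$.

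For the converse, suppose PST between vertices $j$ and $N+1-j$ holds at some time $t_0'$ with phase $\phi'$ and that $P_{j-1}(\lambda_k) \neq 0$ for every $k$. Dividing the identity of Lemma \ref{lemma:PST} by the nonzero $P_{j-1}(\lambda_k)$ gives $e^{it_0'\lambda_k} = e^{i\phi'}\epsilon_k$, where $\epsilon_k := P_{N-j}(\lambda_k)/P_{j-1}(\lambda_k)$ is real; unit modulus then forces $\epsilon_k \in \{+1,-1\}$. Squaring yields $e^{2it_0'\lambda_k} = e^{2i\phi'}$ for every $k$, so by functional calculus $e^{2it_0'J} = e^{2i\phi'}I$, and $V := e^{-i\phi'}e^{it_0'J}$ is a real Hermitian involution that commutes with $J$ and satisfies $V\mathbf{e}_{j-1} = \mathbf{e}_{N-j}$.

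The substantive step is to upgrade these facts to the full mirror symmetry of $J$. Applying $VJ = JV$ to $\mathbf{e}_{j-1}$ and expanding by the three-term recurrence gives
\[
a_{j-2}V\mathbf{e}_{j-2} + a_{j-1}V\mathbf{e}_j \;=\; a_{N-j-1}\mathbf{e}_{N-j-1} + (b_{N-j}-b_{j-1})\mathbf{e}_{N-j} + a_{N-j}\mathbf{e}_{N-j+1};
\]
unitarity of $V$ together with $V\mathbf{e}_{j-1} = \mathbf{e}_{N-j}$ forces $V\mathbf{e}_{j-2}$ and $V\mathbf{e}_j$ to vanish in positions $j-1$ and $N-j$, so matching the $\mathbf{e}_{N-j}$-coefficient yields $b_{j-1} = b_{N-j}$. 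Iterating with higher powers of $J$, and using $V^2 = I$ and orthogonality of consecutive columns of $V$ to collapse the two-parameter ambiguity at each step, one inductively obtains $a_{j-1-k} = a_{N-j+k-1}$, $b_{j-1-k} = b_{N-j+k}$, and $V\mathbf{e}_{j-1-k} = \mathbf{e}_{N-j+k}$. After $j-1$ steps one has $V = R$, hence $e^{it_0'J} = e^{i\phi'}R$ and $e^{it_0'J}\mathbf{e}_0 = e^{i\phi'}\mathbf{e}_{N-1}$, which is PST between vertices $1$ and $N$ at time $t_0'$. The main obstacle is this outward propagation: at each stage $V\mathbf{e}_{j-1-k}$ a priori lies in a two-dimensional subspace consistent with the block structure, and one must combine $V^2 = I$, orthogonality of $V$'s columns, and positivity of the $a_\ell$'s to rule out a nontrivial mixture. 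The hypothesis $P_{j-1}(\lambda_k) \neq 0$ is what enables the initial division that yields the two-eigenvalue structure of $V$, so it is indispensable rather than cosmetic.
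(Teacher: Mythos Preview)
Your forward direction is correct and matches the paper's sketch: Kay's theorem gives mirror symmetry, so the reversal $R$ commutes with $J$, each eigenvector is symmetric or antisymmetric, and comparing first entries pins down the sign as $\sigma_k=P_{N-1}(\lambda_k)$. Your remark that the spectral argument behind Lemma~\ref{lemma:PST} is in fact an equivalence is also right: from $P_{N-j}(\lambda_k)=e^{-i\phi}e^{it_0\lambda_k}P_{j-1}(\lambda_k)$ one gets $e^{it_0J}\mathbf e_{j-1}=e^{i\phi}\sum_k(\tilde v_k)_{N-j}\tilde v_k=e^{i\phi}\mathbf e_{N-j}$ directly.

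The converse, however, has a genuine gap precisely where you flag ``the main obstacle.'' Your setup is fine: the hypothesis $P_{j-1}(\lambda_k)\neq0$ lets you divide, the ratios $\epsilon_k$ are $\pm1$, and $V=e^{-i\phi'}e^{it_0'J}$ is a real symmetric orthogonal involution commuting with $J$ with $V\mathbf e_{j-1}=\mathbf e_{N-j}$. The problem is the \emph{first} step of your outward propagation. Applying $VJ=JV$ to $\mathbf e_{j-1}$ yields, after cancelling the $\mathbf e_{N-j}$ component,
\[
a_{j-2}\,V\mathbf e_{j-2}+a_{j-1}\,V\mathbf e_{j}=a_{N-j-1}\,\mathbf e_{N-j-1}+a_{N-j}\,\mathbf e_{N-j+1},
\]
which determines only the combination $a_{j-2}V\mathbf e_{j-2}+a_{j-1}V\mathbf e_{j}$, not the two vectors separately. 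At this stage you do \emph{not} know that $V\mathbf e_{j-2}$ and $V\mathbf e_j$ individually lie in $\operatorname{span}\{\mathbf e_{N-j-1},\mathbf e_{N-j+1}\}$; the tridiagonal support argument only controls $V$ on $\operatorname{span}\{\mathbf e_{j-1},J\mathbf e_{j-1}\}$, which is two-dimensional, not three. The tools you list ($V^2=I$, orthogonality of columns, positivity of the $a_\ell$) give back only the orthogonality relations you already used and the transposed equation from $VJ=JV$ applied to $\mathbf e_{N-j}$; together these still leave a one-parameter orthogonal family of solutions for $(V\mathbf e_{j-2},V\mathbf e_j)$. Once you \emph{do} know $V$ on three consecutive basis vectors your induction works cleanly (then $V\mathbf e_{j-m}$ is forced because $V\mathbf e_{j-m+2}$ is already known), but you never establish that base case except in the special situation $N-j=j$, where $V\mathbf e_{j-1}=\mathbf e_j$ and $V\mathbf e_j=\mathbf e_{j-1}$ give two adjacent values for free. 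In the generic case $|\,(j-1)-(N-j)\,|\ge2$ the argument as written does not close.

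The paper's own discussion is only a sketch and does not spell out the converse either; it points to the original article of Kirkland et al.\ for the full argument. So your forward half is at the level of the paper, but for the converse you would need either a different mechanism to break the initial two-dimensional ambiguity or an entirely different route to mirror symmetry.
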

The proof of the proposition \ref{prop:kirkland} uses Lemma \ref{lemma:PST} and the realness of the OPS corresponding to a real Jacobi matrix. It also uses the fact that $J$ must be mirror symmetric and hence the eigenvectors must be symmetric or antisymmetric, i.e. of the form $(v_k)_j=\pm(v_k)_{N-j+1}, j=1,2,\dots, N$.

\section{Transition Amplitudes}\label{sec:karlinmcgregor}
Recall that PST between vertices means that with probability one, an excitation 
at vertex $j$ will transition to vertex $k$. 
Therefore, it is useful to have a formula to compute such transition probabilities. In \cite{Karlin1955}, Karlin and McGregor obtained a seminal formula relating sequences of orthogonal polynomials to transition probabilities of birth and death processes. Since quantum walks are the quantum versions of random walks, it makes sense to study this formula.

Let $\{X_t\}_{t\geq 0}$ be a Markov process on the state space $S=\{0,1,2,\dots\}$ with transition probabilities given by 
\[
P_{i,j}(t)= \mathbb{P}[X_{t+s}=j|X_s=i], \quad i,j \in S.
\] Note that since $\{X_t\}$ is Markov, the transition probabilities do not depend on $s$. Assume that we have nearest-neighbor interactions given by
\begin{align*}
  P_{i, i+1}(t)&=\lambda_it+o(t),\\
    P_{i,i}(t)&=1-(\lambda_i+\mu_i)t+o(t),\\
    P_{i,i-1}(t)&=\mu_it+o(t),\\
    P_{i,j}(t)&=o(t)\quad \text{ for } |i-j|>1  
    \end{align*} as $t \rightarrow 0$.
Then we can associate the transition probabilities with the matrix 
\[
A=\begin{pmatrix}

     -(\lambda_0+\mu_0) & \lambda_0 & 0 & 0 &0&\cdots \\
     \mu_1 & -(\lambda_1+\mu_1) & \lambda_1 & 0 &0& \cdots  \\
     0&\mu_2&-(\lambda_2+\mu_2)&\lambda_2&0&\cdots\\
     \vdots & \vdots& \vdots & \vdots & \vdots &\ddots
\end{pmatrix}.
\] This matrix is a Jacobi matrix with positive upper and lower diagonal entries and thus generates a real OPS. We note that the matrix $A$
satisfies $P'(t)=P(t)A=AP(t)$, and $P(0)=I$.
Karlin and McGregor obtained a formula for computing transition probabilities in terms of the OPS.
\begin{theorem}[Karlin and McGregor \cite{Karlin1955}]\label{thm:karlin}
Let $\{Q_n(x)\}$ be polynomials satisfying 
\begin{align*}
    -xQ_0(x)&= -(\lambda_0+\mu_0)Q_0(x)+\lambda_0Q_1(x)\\
    -xQ_n(x)&=\mu_nQ_{n-1}(x)-(\lambda_n+\mu_n)Q_n(x)+\lambda_nQ_{n+1}(x)\quad n\geq 1  
\end{align*} where $ Q_0(x)\equiv 0$ and $\mu_n>0, n>0$, $\mu_0\geq 0$ and $\lambda_n>0, n \geq 0$. 
Then there exists a positive measure $\psi$ on $[0,\infty)$ such that the $\{Q_n(x)\}$ are orthogonal with respect to $\psi$ and 
the transition probabilities $P_{i,j}(t)$ may be represented by
\begin{equation}\label{eq:BDKarlinMcGregor}
P_{i,j}(t)=\frac{\displaystyle \int_{0}^{\infty}e^{-xt}Q_i(x)Q_j(x)d\psi(x)}{\displaystyle \int
_0^{\infty}Q_j(x)^2d\psi(x)}.
\end{equation}
\end{theorem}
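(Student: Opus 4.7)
The plan is to exploit the fact that the polynomials $Q_n(x)$ are formal eigenvectors of $A$. First I will establish existence of the orthogonality measure $\psi$ via Favard's theorem. The recurrence $-xQ_n = \mu_n Q_{n-1} - (\lambda_n+\mu_n)Q_n + \lambda_n Q_{n+1}$ rearranges to $xQ_n = -\lambda_n Q_{n+1} + (\lambda_n+\mu_n)Q_n - \mu_n Q_{n-1}$, and replacing $Q_n$ by $\tilde{Q}_n = (-1)^n Q_n$ produces a standard symmetric Jacobi recurrence $x\tilde{Q}_n = \lambda_n \tilde{Q}_{n+1} + (\lambda_n+\mu_n)\tilde{Q}_n + \mu_n \tilde{Q}_{n-1}$ with strictly positive off-diagonal coefficients. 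Favard's theorem then produces a positive measure $\psi$ with respect to which the $\tilde{Q}_n$, and hence the $Q_n$, are orthogonal. To verify that the support of $\psi$ lies in $[0,\infty)$, I would examine the finite upper-left truncations $A_N$: the zeros of $Q_{N+1}$ coincide with $-\sigma(A_N)$ (by the eigenvalue equation below), and because $A$ generates a sub-Markovian semigroup, $-A_N$ is positive semidefinite in the weighted $\ell^2$ inner product given by the invariant weights $\pi_n = \lambda_0\cdots\lambda_{n-1}/(\mu_1\cdots\mu_n)$. Consequently the zeros of every $Q_n$ lie in $[0,\infty)$, and so does the support of $\psi$.

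Next I would observe that the given recurrence is exactly the statement $A\vec{Q}(x) = -x\vec{Q}(x)$, where $\vec{Q}(x) = (Q_0(x), Q_1(x), Q_2(x), \ldots)^{\top}$; this is just the matrix form of the three-term relation. Exponentiating formally gives $e^{tA}\vec{Q}(x) = e^{-xt}\vec{Q}(x)$, which read componentwise is $\sum_k (e^{tA})_{i,k}Q_k(x) = e^{-xt}Q_i(x)$. Multiplying both sides by $Q_j(x)$, integrating against $d\psi$, and using orthogonality to collapse the left-hand sum to a single term yields
\[
(e^{tA})_{i,j} \int_0^\infty Q_j(x)^2\, d\psi(x) = \int_0^\infty e^{-xt} Q_i(x) Q_j(x)\, d\psi(x).
\]
Because the transition matrix satisfies $P'(t) = AP(t)$ with $P(0) = I$, uniqueness for this linear matrix ODE identifies $P(t)$ with $e^{tA}$, and dividing by $\|Q_j\|_\psi^2$ gives \eqref{eq:BDKarlinMcGregor}.

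The main obstacle is analytic rigor in the infinite-dimensional setting: the manipulation $e^{tA}\vec{Q}(x) = e^{-xt}\vec{Q}(x)$ depends on an infinite sum and the interchange of summation with integration against $\psi$ is not automatic. The standard workaround is truncation: establish the identity first for the $(N+1)\times(N+1)$ submatrix $A_N$, for which the spectral theorem applies verbatim and the corresponding orthogonality measure $\psi_N$ is a finite sum of point masses at the zeros of $Q_{N+1}$, then let $N\to\infty$, using uniform boundedness of the transition probabilities and vague convergence of $\psi_N$ to $\psi$ (through the Stieltjes-type limiting arguments found in \cite{Chihara}) to extend the formula to the original setting. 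A subsidiary obstacle is the possibility of explosion, which would force $P(t)$ to be merely a minimal rather than unique solution of the ODE; this is handled by imposing standard regularity conditions on the birth and death rates so that the chain is non-explosive.
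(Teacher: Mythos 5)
The paper itself offers no proof of this theorem --- it is quoted from Karlin and McGregor's 1955 paper, and the only trace of the original argument is the remark following the statement tying it to the Stieltjes moment problem. Measured against that original route (construction of $\psi$ via the moment problem and Laplace-transform/resolvent analysis of the minimal transition function), your proposal is a genuinely different but essentially sound argument: Favard for existence of $\psi$, the eigenvalue identity $A\vec{Q}(x)=-x\vec{Q}(x)$, the exact finite-section formula with the Gauss-quadrature measures $\psi_N$, and a Helly/vague-convergence passage to the limit. The limit step works for the reasons you hint at: $e^{-xt}Q_i(x)Q_j(x)$ vanishes at infinity for $t>0$, the $\psi_N$ are probability measures, and $\int Q_j^2\,d\psi_N$ is independent of $N$ for $N>j$ because $\psi_N$ matches the moments of $\psi$ up to order $2N+1$. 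Two spots need tightening. First, deducing that $-A_N$ is positive semidefinite \emph{because} $A$ generates a sub-Markovian semigroup is circular at this stage --- the semigroup is part of what is being constructed; instead apply Gershgorin to $-A_N$, or use the Dirichlet-form identity
\begin{equation*}
\langle -A_N f, f\rangle_{\pi}=\sum_i \pi_i\lambda_i\left(f_{i+1}-f_i\right)^2+\mu_0 f_0^2\;\geq\; 0.
\end{equation*}
Second, invoking ODE uniqueness plus a non-explosion hypothesis proves less than the cited theorem, which holds for the minimal transition function with no regularity conditions on the rates; since your truncation limit $e^{tA_N}\to P(t)$ converges precisely to the minimal solution, you get the full statement by taking that limit as the definition of $P_{i,j}(t)$ rather than appealing to uniqueness. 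Finally, note that the initial condition in the statement should read $Q_0\equiv 1$ (the printed $Q_0\equiv 0$ would force every $Q_n\equiv 0$); your normalization of $\psi_N$ implicitly uses $Q_0=1$.
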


\begin{remark}
    We note here that a remarkable part of the above theorem is the connection between birth and death processes and the Steiltjes moment problem.
\end{remark}
\noindent The quantum analogue of a transition probability is called a \textit{transition amplitude.}

\begin{definition}
 The \textit{transition amplitude}, $c_{nm}(t)$, for a state at vertex $n$ 
 to be found at vertex $m$ 
  at time $t$, and is given by
   \[
   c_{nm}(t)={\bf e}_m^{\top}e^{itJ}{\bf e}_n, \quad n,m=0,1,2,\dots, N-1.
   \]
\end{definition}
The probability of a state at vertex $n$ to be found at vertex $m$ is given by $|c_{nm}(t)|^2$.
Because $J$ may be diagonalized using the corresponding OPS, it is quick to derive the following formula for transition amplitudes. 
\begin{theorem}
Let $G$ be a path graph on $N$ vertices and let $\{P_n(x)\}_{n=0}^N$ be the orthonormal OPS corresponding to the weighted adjacency matrix $J$ with weight function $w(x)$. Let $\{\lambda_k\}_{k=0}^{N-1}$ be the zeros of $P_N(x)$.  
Then the transition amplitudes are given by
\begin{equation}
    c_{nm}(t)=\sum_{k=0}^{N-1}P_n(\lambda_k)P_m(\lambda_k)w(\lambda_k)e^{i\lambda_kt}, \quad n,m = 0,1,2\dots, N-1.
\end{equation}

\end{theorem}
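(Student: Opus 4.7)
The plan is to obtain the formula by explicitly diagonalizing $J$ and then reading off the matrix element ${\bf e}_m^{\top}e^{itJ}{\bf e}_n$. Since $J$ is real symmetric with simple spectrum $\{\lambda_k\}_{k=0}^{N-1}$ (the zeros of $P_N$ are real and simple by Proposition \ref{prop:zeros}), the spectral theorem provides $J=UDU^{\top}$ with $D=\mathrm{diag}(\lambda_0,\dots,\lambda_{N-1})$ and $U$ an orthogonal matrix whose columns form an eigenbasis. From \eqref{eq:evector}, the vector $v_k=(P_0(\lambda_k),\dots,P_{N-1}(\lambda_k))^{\top}$ is an eigenvector for $\lambda_k$, so the $k$-th column of $U$ is $\tilde{v}_k=v_k/\|v_k\|$.

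With this in hand, the computation is mechanical. Using $e^{itJ}=Ue^{itD}U^{\top}$ and writing out the matrix product,
\[
c_{nm}(t)={\bf e}_m^{\top}Ue^{itD}U^{\top}{\bf e}_n=\sum_{k=0}^{N-1}U_{mk}\,U_{nk}\,e^{i\lambda_k t}=\sum_{k=0}^{N-1}\frac{P_m(\lambda_k)\,P_n(\lambda_k)}{\|v_k\|^2}\,e^{i\lambda_k t}.
\]
Comparing this with the statement of the theorem, all that remains is to identify $\|v_k\|^{-2}$ with the weight $w(\lambda_k)$.

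The one step that requires a little thought, and is the main obstacle, is this identification; I would derive it from a dual-orthogonality argument. Define the $N\times N$ matrix $M$ by $M_{kn}=\sqrt{w(\lambda_k)}\,P_n(\lambda_k)$. The orthonormality of $\{P_n\}$ with respect to $\mathcal{L}[p]=\sum_{k}p(\lambda_k)w(\lambda_k)$ translates to $(M^{\top}M)_{nm}=\sum_k w(\lambda_k)P_n(\lambda_k)P_m(\lambda_k)=\delta_{nm}$, so $M$ is an orthogonal matrix. For a square real matrix, $M^{\top}M=I$ forces $MM^{\top}=I$, which on the diagonal gives $w(\lambda_k)\sum_{n=0}^{N-1}P_n(\lambda_k)^2=1$, i.e., $\|v_k\|^2=1/w(\lambda_k)$. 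Substituting this into the displayed sum yields the claimed formula for $c_{nm}(t)$. Beyond this dual-orthogonality observation, the proof is essentially bookkeeping once one notices that the spectral decomposition of $J$ is already encoded in the orthogonal polynomials $\{P_n\}$.
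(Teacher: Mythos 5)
Your proof is correct and follows exactly the route the paper indicates: it states that the formula is ``quick to derive'' because $J$ may be diagonalized as $J=PDP^{\top}$ with orthonormal eigenvector columns $\tilde{v}_k=v_k/\|v_k\|$ built from the polynomials as in \eqref{eq:evector}, which is precisely your computation. Your dual-orthogonality argument identifying $\|v_k\|^{-2}=w(\lambda_k)$ (the Christoffel numbers) is the standard and correct way to fill in the one detail the paper leaves implicit.
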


We illustrate this idea with an explicit example utilizing the Krawtchouk polynomials. 

  \begin{example} 
   Consider a spin chain with $5$ qubits and let the Hamiltonian be represented by the following Jacobi matrix:
   \[
J=\begin{pmatrix}
2 & 1&0 & 0&0   \\
     1 & 2 & \sqrt{3/2} &0&0   \\
     0 & \sqrt{3/2} & 2 & \sqrt{3/2}&0 \\
      0&0 & \sqrt{3/2} & 2&1\\
      0&0&0&1&2
\end{pmatrix}.
\]
 Then $J$ generates the orthonormal Krawtchouk polynomials \[\tilde{K}_n(x;1/2,4):=\frac{K_n(x;1/2,4)}{a_0a_1\dots a_{n-1}}\] where $a_n=\sqrt{\frac{(4-n)(n+1)}{4}}$ and $K_n(x;1/2,4)$ is defined as in \eqref{def:krawtchouk}. Recall from Example \ref{ex:krawtchouk} that $K_5(x;1/2,4)$ has zeros at $x=0,1,2,3,4$, and thus so does $\tilde{K}(x;1/2,4)$, hence the eigenvalues of $J$ are $0,1,2,3$ and $4$.

The corresponding orthogonality weight function is $w(x)=\frac{1}{2^4}\binom{4}{x}$ and thus the transition amplitudes
of this system are given by
\[
c_{nm}(t)=\sum_{x=0}^4 \frac{1}{2^4}\binom{4}{x}\tilde{K}_n(x)\tilde{K}_m(x)e^{ixt}
\] 
Using the fact that $\tilde{K}_0(x)=1$ and $\tilde{K}_4(x)=\frac{2}{3}\left(x^4-8x^3+20x^2-16x+3/2\right)$, one can see that $c_{04}(\pi)=1$ and thus the quantum system corresponding to the Krawtchouk polynomials exhibits PST at $t=\pi$. This is consistent with the fact that $J$ is mirror-symmetric and the eigenvalues satisfy $\left(\lambda_k-\lambda_{k-1}\right)\pi=\pi$.
\begin{remark}\label{remark:krawtchoukPST}
It is not too difficult to see that this result holds for all $N+1\times N+1$ Jacobi matrices corresponding to the orthonormal Krawtchouk polynomials with $M=N$ and $p=1/2$.
\end{remark}
\end{example}

   \section{Exceptional Orthogonal Polynomials}\label{sec:XOPS}
There has been recent activity in the study of Exceptional Orthogonal Polynomials (XOPs) which are generalizations of classical orthogonal polynomials but the sequences are missing finitely many degrees (see \cite{castro2022new}, \cite{D21}, \cite{MR4367467}, or
\cite{KrawtchoukVinet} for recent discussion and applications of XOPs). One of the properties that XOPs share with OPSs is that they also satisfy recurrence relations, however, these are of order $n \geq 4$, and thus the corresponding operator is represented by an $n$-diagonal matrix with $n \geq 4$. Therefore, it is natural to ask if there is a relationship between XOPs and quantum walks on graphs which exhibit interactions with more than just nearest-neighbors. In \cite{Miki_2022}, the authors explore an explicit application of X-Krawtchouk polynomials to this larger class of graphs, which we recall below. 
\begin{definition}
 Let $N$ be a positive integer and let $S=\{0, 1, 2, \dots, N, N+3\}$. Then define the \textit{$X_2$- Krawtchouk Polynomials} $\{\hat{K}^{(2)}_n(x;p,N)\}_{n \in S}$  by
 \[
 \hat{K}^{(2)}_n(x;p,N)= \frac{1}{N+3-n}\begin{vmatrix}
     (N-x)K_2(x-N+1;p,-N-2)(x) & K_n(x)\\
     -(1-x)K_2(x-N+2;p, -N-2)(x) & K_n(x+1) 
 \end{vmatrix}
 \] for $n=0,1,2, \dots N$ and 
\begin{multline}
\displaystyle
     \hat{K}^{(2)}_{N+3}(x;p,N)=\sum_{0 \leq k,j\leq 2} \frac{(-2)_j(-2)_k(p-1)^{2-k}p^{2-j}}{j!k!}(-N-3)_{2-j}(-N-3)_{2-k}\\ 
\times K_{N+k+j+1}(x+k+1; p, N+k+j+1)(x).
 \end{multline}

\end{definition} 

\noindent The $X_2$-Krawtchouk polynomials form an orthogonal sequence with respect to 
 \[
 \sum_{x=-1}^N \hat{K}^{(2)}_n(x;p,N)\hat{K}^{(2)}_m(x;p,N)\hat{w}(x)
= \hat{h}_n\delta_{n,m} \] where 
\begin{equation}
\hat{w}(x)=\frac{w(x+1; p, N+1)}{K_2(x-N-1;p,-N-2)K_2(x-N;p,-N-2)}
\end{equation} where $w(x)= \binom{N}{x} p^x(1-p)^{N-x}$ is the weight function of the ordinary Krawtchouk polynomials defined in \eqref{def:Krawtchouk} and  and $\hat{h}_n$ are constants. 

The $X$-Krawtchouk polynomials are formed by performing Darboux transformation to the ordinary Krawtchouk difference operator $L$ which is given by
\[
L[f(x)]:=p(N-x)\left[ f(x+1)-f(x)\right]+x(1-p)\left[f(x-1)-f(x) \right].
\]
They satisfy a 7-term recurrence relation,
\begin{equation}
\lambda_x\hat{K}^{(2)}_n(x;p,N)=\sum_{k=-3}^{3}\beta_{n,k}\hat{K}^{(2)}_{n+k}(x;p,N),\quad x \in \{-1, 0,1,2,\dots, N\}
\end{equation}
where 
\[\lambda_x=-K_3(x-N;p,-N-1)
\] is given in terms of the ordinary Krawtchouk polynomial of degree 3. Formulas for the coefficients $\beta_{n,k}$ may be found in \cite{Miki_2022} but we note that 
$\beta_{N+3,k}=0$ for $k=-2,-2,1,2$ and $3$. Thus, one may associate the $X_2$-Krawtchouk polynomials with a quantum walk on $N+2$  vertices, labeled $0,1,2,\dots, N, N+3$, where transition occurs amongst the three nearest-neighbors except for vertex $N+3$ which only interacts with vertex $N$.

\begin{theorem}[Miki, Tsujimoto and Vinet \cite{Miki_2022}]\label{thrm:KarlinXKrawtchouk}
The transition amplitude of a quantum walk with Jacobi matrix corresponding to the $X_2$-Krawtchouk polynomials is given by
\begin{equation}
    c_{nm}(t)=\sum_{x=-1}^NT_n(x;p)T_m(x;p)e^{-i\lambda_xt}
\end{equation}
\end{theorem}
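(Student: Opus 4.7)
The plan is to mirror the tridiagonal derivation of the transition amplitude formula from Section~\ref{sec:karlinmcgregor}, adapted to the 7-diagonal Hamiltonian arising from the $X_2$-Krawtchouk family. The 7-term recurrence relation
$$
\lambda_x\,\hat K_n^{(2)}(x;p,N)=\sum_{k=-3}^{3}\beta_{n,k}\,\hat K_{n+k}^{(2)}(x;p,N)
$$
says precisely that the $(N+2)\times(N+2)$ matrix $\hat J$ with entries $\hat J_{n,n+k}=\beta_{n,k}$ (indexed by $n\in S=\{0,1,\dots,N,N+3\}$) has the vector $\vec v_x=(\hat K_n^{(2)}(x;p,N))_{n\in S}$ as an eigenvector with eigenvalue $\lambda_x$ for each $x\in\{-1,0,1,\dots,N\}$. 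The vanishing pattern $\beta_{N+3,k}=0$ for $k\neq -3$ confirms that $\hat J$ really is the adjacency operator of the quantum walk of the statement, with vertex $N+3$ coupled only to vertex $N$.

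The second step is to convert the orthogonality relation
$$
\sum_{x=-1}^{N}\hat w(x)\,\hat K_n^{(2)}(x;p,N)\,\hat K_m^{(2)}(x;p,N)=\hat h_n\,\delta_{nm}
$$
into orthonormality of the spectral basis. Setting $T_n(x;p):=\hat K_n^{(2)}(x;p,N)\sqrt{\hat w(x)/\hat h_n}$, the identity above becomes $\sum_x T_n(x;p)T_m(x;p)=\delta_{nm}$, and by the standard dual-orthogonality argument for finite OPSs (the inverse of an orthogonal matrix equals its transpose) one also obtains $\sum_{n\in S}T_n(x;p)T_n(y;p)=\delta_{xy}$. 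A preliminary diagonal similarity on $\hat J$, analogous to the passage between the monic Jacobi matrix $\mathcal J$ and the symmetric $J$ earlier in the paper, is needed so that the eigenvector matrix $U=(T_n(x;p))_{n,x}$ is genuinely orthogonal.

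With the spectral decomposition of $\hat J$ in hand, the time-evolution operator factors as a sum over $x$ of rank-one projectors carrying the phase $e^{-i\lambda_x t}$, so that
$$
c_{nm}(t)=\mathbf e_m^{\top}e^{it\hat J}\mathbf e_n=\sum_{x=-1}^{N}T_n(x;p)T_m(x;p)e^{-i\lambda_x t},
$$
which is the claim (the sign in the exponent is consistent with the definition $\lambda_x=-K_3(x-N;p,-N-1)$ adopted in \cite{Miki_2022}). The main obstacle is the second step: unlike a classical tridiagonal Jacobi matrix, $\hat J$ is not automatically symmetric in the natural indexing, so one must verify positivity of the products $\beta_{n,k}\beta_{n+k,-k}$ in order to construct the diagonal rescaling that produces a Hermitian operator with the claimed orthonormal spectral basis. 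Once this symmetrization is performed and orthonormality of $\{T_n(x;p)\}$ is in place, the remainder is routine linear algebra identical to the tridiagonal derivation.
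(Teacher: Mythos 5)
Your proposal takes essentially the same route as the paper's proof sketch: pass to the orthonormal polynomials so the $7$-diagonal recurrence matrix becomes Hermitian, diagonalize it with the eigenvectors built from $T_n(x;p)=\sqrt{\hat w(x)}\,\tilde K^{(2)}_{n}(x;p,N)$ (your $\hat K^{(2)}_n\sqrt{\hat w(x)/\hat h_n}$ is exactly this, and your \say{main obstacle} is automatic, since once the square eigenvector matrix $U$ is orthogonal you get $\hat J'=U\Lambda U^{\top}$, which is symmetric without any separate positivity check on $\beta_{n,k}\beta_{n+k,-k}$). The one blemish is the sign parenthetical: with eigenvalue $\lambda_x$ the evolution $e^{it\hat J}$ produces $e^{+i\lambda_x t}$, and the $e^{-i\lambda_x t}$ in the statement comes from the $e^{-itH}$ convention of \cite{Miki_2022} (an inconsistency the paper itself inherits), not from the definition $\lambda_x=-K_3(x-N;p,-N-1)$.
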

The proof uses the fact that the matrix associated with the orthonormal $X_2$-Krawtchouk polynomials, $\tilde{K}^{(2)}_n(x;p,N)$, is Hermitian, and thus may be diagonalized using the eigenvectors $v_x=(T_1(x;p), T_2(x;,p),\dots, T_{N+2}(x;p))^T$, which are given in terms of the $X_2$-Krawtchouk functions:
\[
 T_n(x;p)=
\begin{cases}
   \sqrt{\hat{w}(x)}\tilde{K}^{(2)}_{n-1}(x;p,N) \quad & n=1,2,\dots,N+1\\
  \sqrt{\hat{w}(x)}\tilde{K}^{(2)}_{N+3}(x;p,N) & n=N+2.
\end{cases}
\]\\
As a corollary to Theorem \ref{thrm:KarlinXKrawtchouk}, it can be shown that for $p=1/2$, perfect return occurs with probability one, but there is no PST between the end vertices. This is contrary to the model associated to the ordinary Krawtchouk polynomials as stated in Remark \ref{remark:krawtchoukPST}.

\section{Open Questions}
It is shown in \cite{kempton2016} that given an unweighted path graph of length $n \geq 4$, there does not exist a potential which induces PST between the endpoints. It has also been shown that unweighted complete graphs with $n\geq 3$ vertices do not have PST. Tamon and Kendon provide a discussion  in \cite{tamon2011} of results regarding PST for complete graphs, path graphs, hypercubic graphs, Hamming graphs and integral circulants, but this list is not finished, thus the search to characterize all graphs with PST remains.

 As mentioned in Section \ref{sec:PST},  the authors of \cite{vinet2012} describe a process called \say{spectral surgery} which allows one to construct new Jacobi matrices from old ones exhibiting PST so that the new system also exhibits PST. The procedure involves removing points from the spectrum and looking at new weight functions of the form $\tilde{w}(x)=\prod_{k=j}^n(x-\lambda_k)w(x)$, with $0\leq j,n\leq N-1$. 
 However, there is a restriction that the spectral points must be chosen so that the new weights are positive. Therefore, investigation into performing spectral surgery on arbitrary points of the spectrum is warranted. We remark that this corresponds to performing the Christoffel transformation on a family of orthogonal polynomials at points within the support of the orthogonality measure.

In reference to \cite{Miki_2022}, a possible direction is to come up with more explicit examples of quantum walks on graphs  related to other families of XOPs. The authors suggest looking at the  dual Hahn polynomials which is another example of a finite OPS. It could also be useful to study how to extend the applications of XOPs to higher dimensional graphs. Thus, exploration of multivariate XOPs is a challenging but exciting path to take.

Recall Proposition \ref{prop:kirkland}, which states that PST between end vertices implies PST between opposite interior vertices. The converse to the statement is only proven for graphs with Hamiltonians which have eigenvectors of a specific type. Thus, it remains an open question as to if the converse holds for arbitrary Jacobi matrices.

Finally, in \cite{bailey2024}, the authors study Jacobi matrices which exhibit PST at time $T_0$ as well as have the property that there exists some time $t<T_0$ such that the first component of $e^{itJ}{\bf{e_0}}$ is 0, while the last component is not 1. The authors call this phenomena \textit{Early State Exclusion} (ESE) and show the existence of an $N \times N$ Jacobi matrix with ESE for any even $N \geq 4$. The case for odd $N \geq 5$ is still to be determined. 

While this is far from a complete list of open questions regarding PST in quantum walks on graphs, we hope that it demonstrates the many directions of the topic and possibly inspires readers to explore further paths. \\

\noindent {\bf Acknowledgments.} I am extremely grateful to Prof. Michael Nathanson for the helpful discussions and comments which have strengthened this paper.

\bibliography{bib.bib}
\bibliographystyle{plain}
\end{document}